\renewenvironment{proof}{\noindent\textit{Proof.}}{\hfill $\blacksquare$}
\newtheorem{theorem}{Theorem}
\newtheorem{lemma}{Lemma}
\newcommand{\figwidth}{8}
\begin{document}
\title{\huge Movable-Antenna Array Enhanced Beamforming:\\ Achieving Full Array Gain with Null Steering}
\author{Lipeng Zhu, ~\IEEEmembership{Member,~IEEE,}
		Wenyan Ma,~\IEEEmembership{Student Member,~IEEE,}
		and Rui Zhang,~\IEEEmembership{Fellow,~IEEE}
	\thanks{This work is supported in part by MOE Singapore under Award T2EP50120-0024, National University of Singapore under Research Grant R-261-518-005-720, and The Guangdong Provincial Key Laboratory of Big Data Computing. (\textit{Corresponding author: Lipeng Zhu})}
	\thanks{L. Zhu and W. Ma are with the Department of Electrical and Computer Engineering, National University of Singapore, Singapore 117583 (e-mail: zhulp@nus.edu.sg, wenyan@u.nus.edu).}
	\thanks{R. Zhang is with School of Science and Engineering, Shenzhen Research Institute of Big Data, The Chinese University of Hong Kong, Shenzhen, Guangdong 518172, China (e-mail: rzhang@cuhk.edu.cn). He is also with the Department of Electrical and Computer Engineering, National University of Singapore, Singapore 117583 (e-mail: elezhang@nus.edu.sg).}
	\vspace{-0.6 cm}
}

\maketitle

% As a general rule, do not put math, special symbols or citations
% in the abstract or keywords.

\begin{abstract}
	Conventional beamforming with fixed-position antenna (FPA) arrays has a fundamental trade-off between maximizing the signal power (array gain) over a desired direction and simultaneously minimizing the interference power over undesired directions. To overcome this limitation, this letter investigates the movable antenna (MA) array enhanced beamforming by exploiting the new degree of freedom (DoF) via antenna position optimization, in addition to the design of antenna weights. We show that by jointly optimizing the antenna positions vector (APV) and antenna weights vector (AWV) of a linear MA array, the full array gain can be achieved over the desired direction while null steering can be realized over all undesired directions, under certain numbers of MAs and null-steering directions. The optimal solutions for AWV and APV are derived in closed form, which reveal that the optimal AWV for MA arrays requires only the signal phase adjustment with a fixed amplitude. Numerical results validate our analytical solutions for MA array beamforming and show their superior performance to the conventional beamforming techniques with FPA arrays.
\end{abstract}
%\vspace{-0.3 cm}
% Note that keywords are not normally used for peerreview papers.
\begin{IEEEkeywords}
	Movable antenna (MA) array, beamforming, array gain, null steering.
\end{IEEEkeywords}

%\vspace{-0.3 cm}

% For peer review papers, you can put extra information on the cover
% page as needed:
% \ifCLASSOPTIONpeerreview
% \begin{center} \bfseries EDICS Category: 3-BBND \end{center}
% \fi
%
% For peerreview papers, this IEEEtran command inserts a page break and
% creates the second title. It will be ignored for other modes.
\IEEEpeerreviewmaketitle

\section{Introduction}
\IEEEPARstart{B}{eamforming} is an important signal processing technique for realizing directional signal transmission/reception in multiple-antenna systems. By controlling the amplitude and/or phase of the signal at each antenna, the signal wavefronts to/from different antennas can be constructively superimposed for amplifying signals over desired directions or destructively canceled for eliminating interference over undesired directions \cite{xiao2023array,li2005robust}. Over the past few decades, beamforming techniques have been widely applied in wireless communication, radar, sonar, imaging systems, etc., for fulfilling different performance requirements \cite{xiao2023array,li2005robust,zhu2019millim,zhu2017Kronec}. However, due to the fixed geometry of conventional antenna arrays, i.e., fixed-position antenna (FPA) arrays, the existing beamforming solutions in general face a fundamental trade-off between amplifying signals over desired directions and mitigating interference over undesired directions \cite{xiao2023array,li2005robust,zhu2019millim,zhu2017Kronec}. This is because the steering vectors (SVs) of an FPA array have inherent spatial correlation over different steering angles. As such, the maximum signal power or full array gain over the desired direction and null steering over other undesired directions generally cannot be concurrently achieved with classical beamforming designs such as zero-forcing (ZF) \cite{xiao2023array}. 

Recently, movable antenna (MA) was proposed to enable the local movement of antennas for pursuing more favorable channel conditions and achieving better communication performance \cite{zhu2022MAmodel,zhu2023MAmag}. Preliminary studies have validated that by optimizing the MAs' positions, the spatial diversity and multiplexing performance of MA-aided communication systems can be significantly improved compared to conventional FPA systems \cite{zhu2022MAmodel,zhu2023MAmag,ma2022MAmimo,zhu2023MAmultiuser,wu2023movable}. Moreover, an MA array can also achieve enhanced beamforming over FPA arrays by jointly designing the antenna positions vector (APV) and antenna weights vector (AWV). Although such optimization problems for MA arrays have been previously investigated \cite{ismail1991null,hejres2004null,bevelacqua2007optimizing}, only numerical solutions are provided therein which lack analytical insights. Besides, it was shown in \cite{Leshem2021align} that the interference from multiple directions can be approximately nulled by adjusting the distance of two antennas at the receiver, under the assumption of irrational-valued interference angles and an infinitely large region for placing antennas. In summary, existing literatures have not addressed the fundamental question that if it is possible to achieve the full array gain of an MA array with complete interference nulling by exploiting the new degree of freedom (DoF) in antenna position optimization.

To answer this question, we investigate in this letter the enhanced beamforming of a linear MA array by jointly optimizing its APV and AWV. We analytically show that the full array gain can be reaped over the desired signal direction while null steering can be realized over undesired interference directions with MA arrays, under certain numbers of MAs and null-steering directions. The key idea of our proposed solution lies in that the optimal MAs' positions can transform the geometry of the MA array such that the SV over the desired direction becomes orthogonal to those over all undesired directions. Moreover, the optimal solutions for the corresponding AWV and APV of the MA array are derived in closed form, which reveal that only analog beamforming is needed for MA arrays, i.e., the optimal AWV only requires signal phase adjustment with a fixed amplitude \cite{xiao2023array}, thus significantly reducing the beamforming implementation complexity. Numerical results validate our analytical solutions for MA array beamforming and show their performance superiority to the conventional beamforming techniques with FPA arrays.

\textit{Notation}: $a$, $\mathbf{a}$, and $\mathbf{A}$ denote a scalar, a vector, and a matrix respectively. $(\cdot)^{\mathrm{T}}$, $(\cdot)^{\mathrm{H}}$, and $(\cdot)^{-1}$ denote transpose, conjugate transpose, and inverse, respectively. $\mathbb{Z}$, $\mathbb{R}$, and $\mathbb{C}$ represent the sets of integers, real numbers, and complex numbers, respectively. $|a|$ and $\|\mathbf{a}\|_{2}$ denote the amplitude of scalar $a$ and the 2-norm of vector $\mathbf{a}$, respectively. $[\mathbf{a}]_{n}$ denotes the $n$-th entry of vector $\mathbf{a}$. $\otimes$ represents the Kronecker product.

%\vspace{-0.3 cm}
\section{Problem Formulation and Transformation}
As shown in Fig. \ref{Fig:MA_array}, we consider a linear MA array of size $N$, where the position of the $n$-th antenna is denoted by $x_{n}$, $1 \leq n \leq N$. Denote $\mathbf{x}=\left[x_{1},x_{2},\cdots,x_{N}\right]^{\mathrm{T}} \in \mathbb{R}^{N}$ as the APV of the MA array. The SV of the MA array is thus determined by the APV, which is given by
\begin{equation}\label{eq_steering_vec}
	\mathbf{a}(\mathbf{x}, \theta)=\left[e^{j\frac{2\pi}{\lambda}x_{1}\cos\theta}, e^{j\frac{2\pi}{\lambda}x_{2}\cos\theta}, \cdots, e^{j\frac{2\pi}{\lambda}x_{N}\cos\theta}\right]^{\mathrm{T}},
\end{equation}
where $\lambda$ denotes the wavelength and $\theta$ (in deg) is the steering angle with respect to (w.r.t.) the linear MA array shown in Fig. \ref{Fig:MA_array}.
Denoting $\mathbf{w} \in \mathbb{C}^{N}$ as the AWV for beamforming, the beam pattern of the MA array can be expressed as
\begin{equation}\label{eq_beam_pattern}
	G_{\mathbf{x}, \mathbf{w}}(\theta) = \left|\mathbf{a}^{\mathrm{H}}(\mathbf{x}, \theta)\mathbf{w} \right|^{2},~\theta \in [0^{\circ},180^{\circ}).
\end{equation}

\begin{figure}[t]
	\begin{center}
		\includegraphics[width=\figwidth cm]{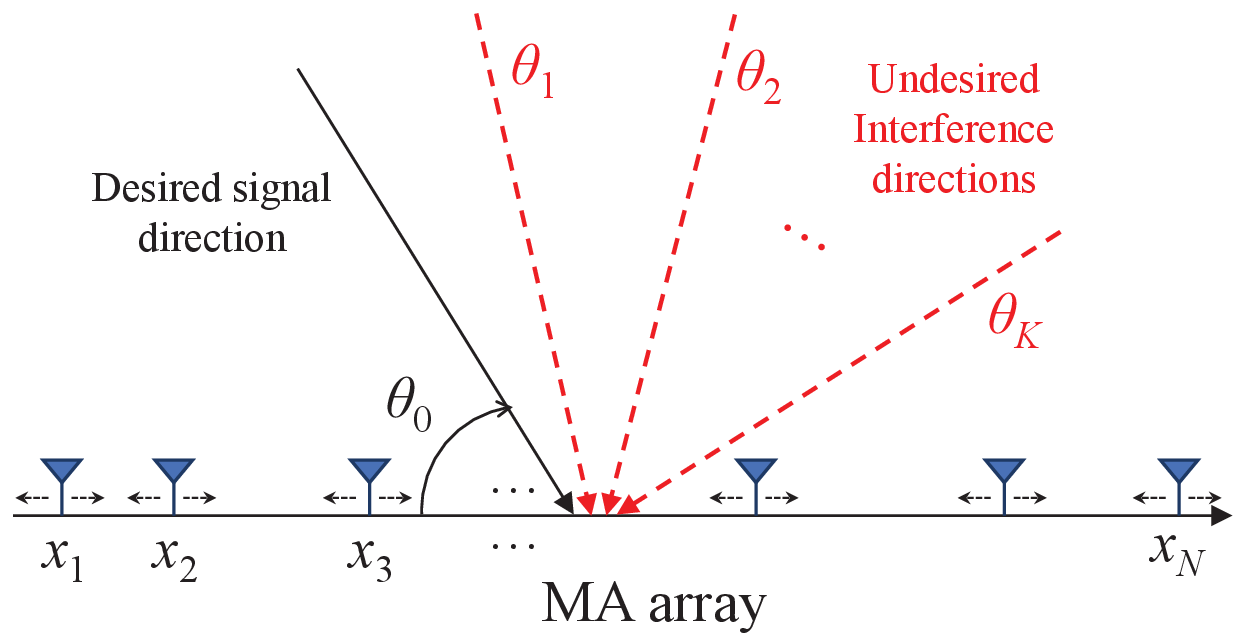}
		\caption{Illustration of the linear MA array and the steering angles.}
		\label{Fig:MA_array}
	\end{center}
\end{figure}

In this letter, we consider the interference dominant scenario (thus, ignoring the receiver noise) where the beam gain should be nulled to zero over all undesired interference directions $\left\{\theta_{k}\right\}_{1 \leq k \leq K}$, where $K$ denotes the total number of null directions. Under this setup, the APV and AWV are jointly optimized for maximizing the beam gain over the desired direction $\theta_{0}$, which can be expressed as the following optimization problem\footnote{An implicit assumption throughout this letter is $\theta_{0} \neq \theta_{k}$ for $\forall k \geq 1$ because the interference cannot be nulled by beamforming if it is incident from exactly the same direction as that of the desired signal. For the case of  $\theta_{0} = \theta_{k}$, interference mitigation can only be implemented by other techniques in the time/frequency domain.}: 
\begin{subequations}\label{eq_problem}
\begin{align}
	\mathop{\mathrm{Max}} \limits _{\mathbf{x}, \mathbf{w}}~~ &G_{\mathbf{x}, \mathbf{w}}(\theta_{0}) \label{eq_problem_a}\\
	\mathrm{s.t.}  ~~~&G_{\mathbf{x}, \mathbf{w}}(\theta_{k}) = 0,~ 1 \leq k \leq K,  \label{eq_problem_b}\\
					&|x_{m}-x_{n}| \geq d_{\min},~  1 \leq m \neq n \leq N,  \label{eq_problem_c} \\
					&\left\|\mathbf{w}\right\|_{2}^{2} = 1,  \label{eq_problem_d}
\end{align}
\end{subequations}
where \eqref{eq_problem_b} is the null-steering constraint; $d_{\min}$ in constraint \eqref{eq_problem_c} is the minimum distance between any two MAs to avoid the coupling effect; and constraint \eqref{eq_problem_d} ensures the normalized power of the AWV. 

It is known that a tight upper bound on the objective function in \eqref{eq_problem_a} is $\bar{G}_{\mathbf{x}, \mathbf{w}}(\theta_{0}) \leq \left\|\mathbf{a}(\mathbf{x}, \theta_{0})\right\|_{2}^{2} \times \left\| \mathbf{w} \right\|_{2}^{2} = N$, which indicates the full array gain over the desired direction. However, due to constraint \eqref{eq_problem_b}, the upper bound cannot be achieved in general with FPA arrays. Specifically, for any given APV $\mathbf{x}$, an optimal solution of the AWV (assumed to be digital beamforming with continuous signal amplitude and phase values) for maximizing $G_{\mathbf{x}, \mathbf{w}}(\theta_{0})$ under the null-steering constraint is given by the ZF beamformer \cite{xiao2023array}, i.e.,
\begin{equation}\label{eq_opt_AWV}
	\begin{aligned}
		&\mathbf{w}^{\mathrm{ZF}}_{\mathbf{x}} = \frac{\mathbf{w}_{\mathbf{x}}}{\left\| \mathbf{w}_{\mathbf{x}} \right\|_{2}},\\
		&\mathbf{w}_{\mathbf{x}}=\left[\mathbf{I}_{N}-\mathbf{A}(\mathbf{x})\left(\mathbf{A}(\mathbf{x})^{\mathrm{H}}\mathbf{A}(\mathbf{x})\right)^{-1} \mathbf{A}(\mathbf{x})^{\mathrm{H}}\right] \mathbf{a}(\mathbf{x}, \theta_{0}),
	\end{aligned}
\end{equation}
with $\mathbf{A}(\mathbf{x})=\left[\mathbf{a}(\mathbf{x}, \theta_{1}),\mathbf{a}(\mathbf{x}, \theta_{2}),\cdots,\mathbf{a}(\mathbf{x}, \theta_{K})\right]$. The resulting beam gain over the desired direction can be obtained as 
\begin{equation}\label{eq_beam_gain}\small
	\begin{aligned}
		&G_{\mathbf{x},\mathbf{w}^{\mathrm{ZF}}_{\mathbf{x}}}(\theta_{0}) = \left|\mathbf{a}^{\mathrm{H}}(\mathbf{x}, \theta_{0}) \mathbf{w}^{\mathrm{ZF}}_{\mathbf{x}} \right|^{2}\\
		&=\left|\mathbf{a}(\mathbf{x}, \theta_{0})^{\mathrm{H}} \left[\mathbf{I}_{N}-\mathbf{A}(\mathbf{x})\left(\mathbf{A}(\mathbf{x})^{\mathrm{H}}\mathbf{A}(\mathbf{x})\right)^{-1} \mathbf{A}(\mathbf{x})^{\mathrm{H}}\right] \mathbf{a}(\mathbf{x}, \theta_{0})\right|\\
		&=N-\underbrace{\mathbf{a}(\mathbf{x}, \theta_{0})^{\mathrm{H}} \mathbf{A}(\mathbf{x})\left(\mathbf{A}(\mathbf{x})^{\mathrm{H}}\mathbf{A}(\mathbf{x})\right)^{-1} \mathbf{A}(\mathbf{x})^{\mathrm{H}} \mathbf{a}(\mathbf{x}, \theta_{0})}_{\triangleq L(\mathbf{x})},
	\end{aligned}
\end{equation}
where $L(\mathbf{x}) \geq 0$ represents the loss of the array gain over the desired direction caused by ZF beamforming for null steering over all undesired directions. For conventional FPA arrays, $\mathbf{A}(\mathbf{x})$, $\mathbf{a}(\mathbf{x}, \theta_{0})$, and $L(\mathbf{x})$ are fixed with given $\mathbf{x}$. Particularly, $L(\mathbf{x})$ increases as the correlation between the SVs over the desired direction and undesired directions becomes higher (see Fig. \ref{Fig:gain_loss} in Section IV for an example). In contrast, for MA arrays, the antenna position optimization for $\mathbf{x}$ offers additional DoFs for decreasing $L(\mathbf{x})$. As such, problem \eqref{eq_problem} can be equivalently transformed into
\begin{subequations}\label{eq_problem2}
	\begin{align}
		\mathop{\mathrm{Min}} \limits _{\mathbf{x}}~~ &L(\mathbf{x}) \label{eq_problem2_a}\\
		\mathrm{s.t.}  ~~~&|x_{m}-x_{n}| \geq d_{\min},~  1 \leq m \neq n \leq N.  \label{eq_problem2_b}
	\end{align}
\end{subequations}

Due to the non-convex forms of $L(\mathbf{x})$ and constraint \eqref{eq_problem2_b} w.r.t. $\mathbf{x}$, problem \eqref{eq_problem2} is a non-convex optimization problem. A straightforward way to solve problem \eqref{eq_problem2} is by exhaustively searching $\mathbf{x}$ subject to \eqref{eq_problem2_b}, which, however, results in an exponential  complexity in terms of $N$ if assuming the search region is bounded. Next, we focus on solving problem \eqref{eq_problem2} under certain values of $N$ and $K$, for which the minimum value of $L(\mathbf{x})$ in \eqref{eq_problem2_a} is zero, i.e., the full array gain of the MA array can be achieved over the desired direction subject to null steering. As such, we consider the following feasibility problem:
\begin{equation}\label{eq_problem3}
	\begin{aligned}
		\mathop{\mathrm{Find}} ~~ &\mathbf{x} \\
		\mathrm{s.t.}  ~~~&L(\mathbf{x})=0,~ \eqref{eq_problem2_b}.
	\end{aligned}
\end{equation}

Since $\left(\mathbf{A}(\mathbf{x})^{\mathrm{H}}\mathbf{A}(\mathbf{x})\right)^{-1}$ is a positive definite matrix, $L(\mathbf{x})=0$ is equivalent to $\mathbf{A}(\mathbf{x})^{\mathrm{H}} \mathbf{a}(\mathbf{x}, \theta_{0})=\mathbf{0}_{K}$, where $\mathbf{0}_{K}$ is a $K$-dimensional vector with all elements being zero. This indicates that the SV over the desired direction $\theta_{0}$ should be orthogonal to those over all undesired directions $\left\{\theta_{k}\right\}_{1 \leq k \leq K}$, i.e.,
\begin{equation}\label{eq_SV_ort}
	\begin{aligned}
		\mathbf{a}(\mathbf{x}, \theta_{k})^{\mathrm{H}} \mathbf{a}(\mathbf{x}, \theta_{0})=0,~1 \leq k \leq K.
	\end{aligned}
\end{equation}
We call \eqref{eq_SV_ort} as the SV orthogonality (SVO) condition for achieving the full array gain over the desired direction subject to null steering. Under this condition, the optimal AWV in \eqref{eq_opt_AWV} can be further simplified as 
\begin{equation}\label{eq_opt_AWV2}
	\begin{aligned}
		&\mathbf{w}^{\star}_{\mathbf{x}} = \frac{\mathbf{a}(\mathbf{x}, \theta_{0})}{\left\| \mathbf{a}(\mathbf{x}, \theta_{0}) \right\|_{2}},
	\end{aligned}
\end{equation}
which has constant-modulus elements and thus can be applied to analog beamforming systems for reducing the implementation complexity of MA arrays.

\section{Antenna Position Optimization}
In this section, we demonstrate the feasibility of problem \eqref{eq_problem3} by finding APV $\mathbf{x}$ to satisfy the SVO condition \eqref{eq_SV_ort} and constraint \eqref{eq_problem2_b} under certain values of $N$ and $K$. To this end, we start from the simple case of $K=1$ in the following lemma.
\begin{lemma} \label{Lemma_1Interf}
	For $K=1$, an APV satisfying the SVO condition \eqref{eq_SV_ort} and constraint \eqref{eq_problem2_b} always exists for any $N \geq 2$.
\end{lemma}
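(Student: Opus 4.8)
The plan is to reduce the SVO condition \eqref{eq_SV_ort} for $K=1$ to a single scalar equation on the unit circle, and then exploit the periodicity of the steering-vector entries to absorb the spacing constraint \eqref{eq_problem2_b}. Expanding the inner product via \eqref{eq_steering_vec}, the condition $\mathbf{a}(\mathbf{x},\theta_{1})^{\mathrm{H}}\mathbf{a}(\mathbf{x},\theta_{0})=0$ becomes
\[
\sum_{n=1}^{N} e^{j\mu x_{n}}=0,\qquad \mu \triangleq \frac{2\pi}{\lambda}\left(\cos\theta_{0}-\cos\theta_{1}\right).
\]
Because $\theta_{0}\neq\theta_{1}$ and both lie in $[0^{\circ},180^{\circ})$, where the cosine is injective, we have $\mu\neq 0$. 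The task is therefore to choose $N$ real positions so that the $N$ unit-modulus numbers $e^{j\mu x_{n}}$ sum to zero.

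The natural construction is to make $\{e^{j\mu x_{n}}\}$ the $N$-th roots of unity: I would require $\mu x_{n}\equiv \frac{2\pi(n-1)}{N}\ (\mathrm{mod}\ 2\pi)$, so that the sum above is a geometric series that vanishes for every $N\geq 2$. The crucial observation is that this pins down each $x_{n}$ only modulo the spatial period $T\triangleq 2\pi/|\mu|$; that is, $x_{n}=\tfrac{n-1}{N}T+m_{n}T$ is admissible for arbitrary integers $m_{n}$, since shifting $x_{n}$ by any multiple of $T$ leaves $e^{j\mu x_{n}}$ unchanged.

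The remaining step is to pick the integer shifts $m_{n}$ so that \eqref{eq_problem2_b} holds, which is where the unbounded aperture is used. I would place the antennas in distinct blocks by setting $m_{n}=(n-1)P$ for an integer $P$ chosen large enough that $PT\geq d_{\min}$ (e.g. $P=\lceil d_{\min}|\mu|/(2\pi)\rceil$). The resulting positions $x_{n}=(n-1)\left(\tfrac{1}{N}+P\right)T$ satisfy, for any $n'>n$, the separation $x_{n'}-x_{n}=(n'-n)\left(\tfrac{1}{N}+P\right)T>PT\geq d_{\min}$, so every pairwise distance exceeds $d_{\min}$ while the phase values are preserved. This yields an explicit feasible APV for all $N\geq 2$.

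The main obstacle is conceptual rather than computational: at first sight the SVO condition seems to fix the positions and thus leave no room for the spacing constraint, but the key insight is that it constrains the $x_{n}$ only modulo $T$. This residual degree of freedom — available precisely because the array aperture is not bounded — is exactly what lets me spread the antennas across separate periods to meet any finite minimum-distance requirement, completing the argument.
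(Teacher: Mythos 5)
Your proposal is correct and is essentially the paper's own proof: your positions $x_{n}=(n-1)\left(\tfrac{1}{N}+P\right)T$ coincide exactly with the paper's uniform-spacing solution $x_{n}^{\star}=(n-1)d$ with $d=\tfrac{(q_{1}+1/N)\lambda}{|\cos\theta_{0}-\cos\theta_{1}|}$, where your integer $P$ plays the role of the paper's $q_{1}$ (both chosen as the smallest integer meeting the $d_{\min}$ constraint), and both arguments rest on making $\{e^{j\mu x_{n}}\}$ the $N$-th roots of unity so the sum vanishes. The only cosmetic difference is that you make the modulo-$T$ degree of freedom explicit before specializing, and note the harmless sign flip when $\mu<0$ merely conjugates the roots of unity, which still sum to zero.
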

\begin{proof}
	For $K=1$, the SVO condition \eqref{eq_SV_ort} is simplified as
	\begin{equation}\label{eq_SV_cond1}
		\begin{aligned}
			\mathbf{a}(\mathbf{x}, \theta_{1})^{\mathrm{H}} \mathbf{a}(\mathbf{x}, \theta_{0}) 
			=  \sum \limits _{n=1}^{N} e^{j\frac{2\pi}{\lambda}x_{n}\left(\cos\theta_{0}-\cos\theta_{1}\right)} =0.
		\end{aligned}
	\end{equation}
	To satisfy constraint \eqref{eq_problem2_b}, an optimal solution for \eqref{eq_SV_cond1} with $N \geq 2$ is given by
	\begin{equation}\label{eq_solution2}
		\begin{aligned}
			x_{n}^{\star}=(n-1)d,~1 \leq n \leq N, 
		\end{aligned}
	\end{equation}
	where $d=\frac{\left(q_{1}+1/N\right)\lambda}{\left|\cos\theta_{0}-\cos\theta_{1}\right|}$ and $q_{1}$ is the minimum integer that ensures $d \geq d_{\min}$. This thus completes the proof.
\end{proof}

Lemma \ref{Lemma_1Interf} indicates that the full array gain over the desired direction $\theta_{0}$ and null steering over the undesired direction $\theta_{1}$ can be concurrently achieved by optimizing the APV for $N \geq 2$\footnote{In practice, the SVO condition \eqref{eq_SV_ort} cannot be satisfied if $\theta_{0}$ and $\theta_{1}$ are extremely close to each other and the region for antenna deployment has a limited size. In this case, a possible solution is to set the distance of MAs as large as possible for minimizing the SVs' correlation over $\theta_{0}$ and $\theta_{1}$. Then, ZF beamforming in \eqref{eq_opt_AWV} can be used for interference nulling.}. Next, we provide the following lemma to extend the result to the case of multiple undesired directions.

\begin{lemma}\label{Lemma_KInterf}
	If an APV satisfying the SVO condition \eqref{eq_SV_ort} and constraint \eqref{eq_problem2_b} exists for $N=N_{1}$ and $K \leq K_{1}$, then an APV satisfying the SVO condition \eqref{eq_SV_ort} and constraint \eqref{eq_problem2_b} also exists for $N=N_{1}N_{2}$ and $K \leq K_{1}+1$, with any $ N_{2} \geq 2$.
\end{lemma}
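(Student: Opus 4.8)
The plan is to build the larger array by tiling $N_2$ shifted copies of a base array that already handles $K_1$ of the directions, and then to exploit the extra $N_2$-fold structure to null the one remaining direction. Concretely, let $\{\theta_0;\theta_1,\dots,\theta_K\}$ with $K\le K_1+1$ be the given directions and write $\mu_k \triangleq \cos\theta_0-\cos\theta_k$, so that the SVO condition \eqref{eq_SV_ort} reads $\sum_n e^{j\frac{2\pi}{\lambda}x_n\mu_k}=0$. First I would invoke the hypothesis for $N=N_1$ on the sub-list $\{\theta_0;\theta_1,\dots,\theta_{\min(K,K_1)}\}$ to obtain a feasible base APV $\mathbf{x}^{(1)}\in\mathbb{R}^{N_1}$, i.e. one whose entries are pairwise $d_{\min}$-separated and which satisfies $\sum_{n=1}^{N_1} e^{j\frac{2\pi}{\lambda}x_n^{(1)}\mu_k}=0$ for all $k\le\min(K,K_1)$.

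Then I would define the candidate APV for the $N_1N_2$-element array by
\[
x_{(p-1)N_1+n} = x_n^{(1)} + (p-1)D,\quad 1\le n\le N_1,\ 1\le p\le N_2,
\]
that is, $N_2$ copies of the base array, each shifted by a multiple of a common offset $D$ to be fixed later. The key observation is that this construction makes the full SV factor as a Kronecker product, $\mathbf{a}(\mathbf{x},\theta)=\mathbf{a}(\mathbf{x}^{\mathrm{out}},\theta)\otimes\mathbf{a}(\mathbf{x}^{(1)},\theta)$ with $\mathbf{x}^{\mathrm{out}}=[0,D,\dots,(N_2-1)D]^{\mathrm{T}}$. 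By the mixed-product property $(\mathbf{u}\otimes\mathbf{v})^{\mathrm{H}}(\mathbf{p}\otimes\mathbf{q})=(\mathbf{u}^{\mathrm{H}}\mathbf{p})(\mathbf{v}^{\mathrm{H}}\mathbf{q})$, the inner product splits as
\[
\mathbf{a}(\mathbf{x},\theta_k)^{\mathrm{H}}\mathbf{a}(\mathbf{x},\theta_0) = \Big(\sum_{p=1}^{N_2} e^{j\frac{2\pi}{\lambda}(p-1)D\mu_k}\Big)\Big(\sum_{n=1}^{N_1} e^{j\frac{2\pi}{\lambda}x_n^{(1)}\mu_k}\Big).
\]
For every $k\le\min(K,K_1)$ the second factor vanishes by construction of $\mathbf{x}^{(1)}$, so the SVO condition holds for those directions irrespective of $D$.

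If $K\le K_1$ this already closes the argument, so the remaining work concerns the single direction $\theta_{K_1+1}$ (present only when $K=K_1+1$), for which the second factor may be nonzero and I would instead force the first factor to vanish. That first factor is precisely the $K=1$ SVO quantity for a uniform $N_2$-element array of spacing $D$; applying the construction of Lemma \ref{Lemma_1Interf} (with $N\to N_2\ge 2$ and the angle pair $(\theta_0,\theta_{K_1+1})$), I would set $D=\frac{(q+1/N_2)\lambda}{|\mu_{K_1+1}|}$, which turns $\sum_{p=1}^{N_2} e^{j\frac{2\pi}{\lambda}(p-1)D\mu_{K_1+1}}$ into a sum of $N_2$ distinct $N_2$-th roots of unity and hence makes it zero.

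Finally I would verify the spacing constraint \eqref{eq_problem2_b}. Intra-block distances coincide with those of $\mathbf{x}^{(1)}$ and thus already exceed $d_{\min}$, while any inter-block distance is at least $D-(\max_n x_n^{(1)}-\min_n x_n^{(1)})$. Since the integer $q$ in the expression for $D$ is free to be taken arbitrarily large, I would enlarge it until $D$ exceeds the base aperture plus $d_{\min}$, which simultaneously preserves the root-of-unity cancellation and guarantees every cross-block pair is $d_{\min}$-separated. The main obstacle, and the real crux of the lemma, is recognizing the multiplicative Kronecker factorization that decouples the already-nulled directions from the new one; once that is in place, nulling the extra direction reduces to Lemma \ref{Lemma_1Interf} and the minimum-distance verification is routine, the only care being to pick $q$ large enough to meet both requirements at once.
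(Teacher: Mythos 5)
Your proposal is correct and follows essentially the same route as the paper's own proof: tiling $N_2$ shifted copies of the base APV, factoring the steering vector as a Kronecker product so the inner product splits multiplicatively, nulling the one extra direction via the Lemma \ref{Lemma_1Interf} root-of-unity spacing, and taking the shift (equivalently the integer $q$) large enough to meet the minimum-distance constraint across blocks. The only cosmetic difference is that the paper treats $K=K_1+1$ and $K\le K_1$ as two explicit cases with explicit values of the shift $d$, whereas you fold the $K\le K_1$ case into the remark that the SVO condition holds irrespective of $D$, which is a harmless reorganization.
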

\begin{proof}
	See Appendix \ref{App_Lemma_KInterf}.
\end{proof}

\begin{figure}[t]
	\begin{center}
		\includegraphics[width=\figwidth cm]{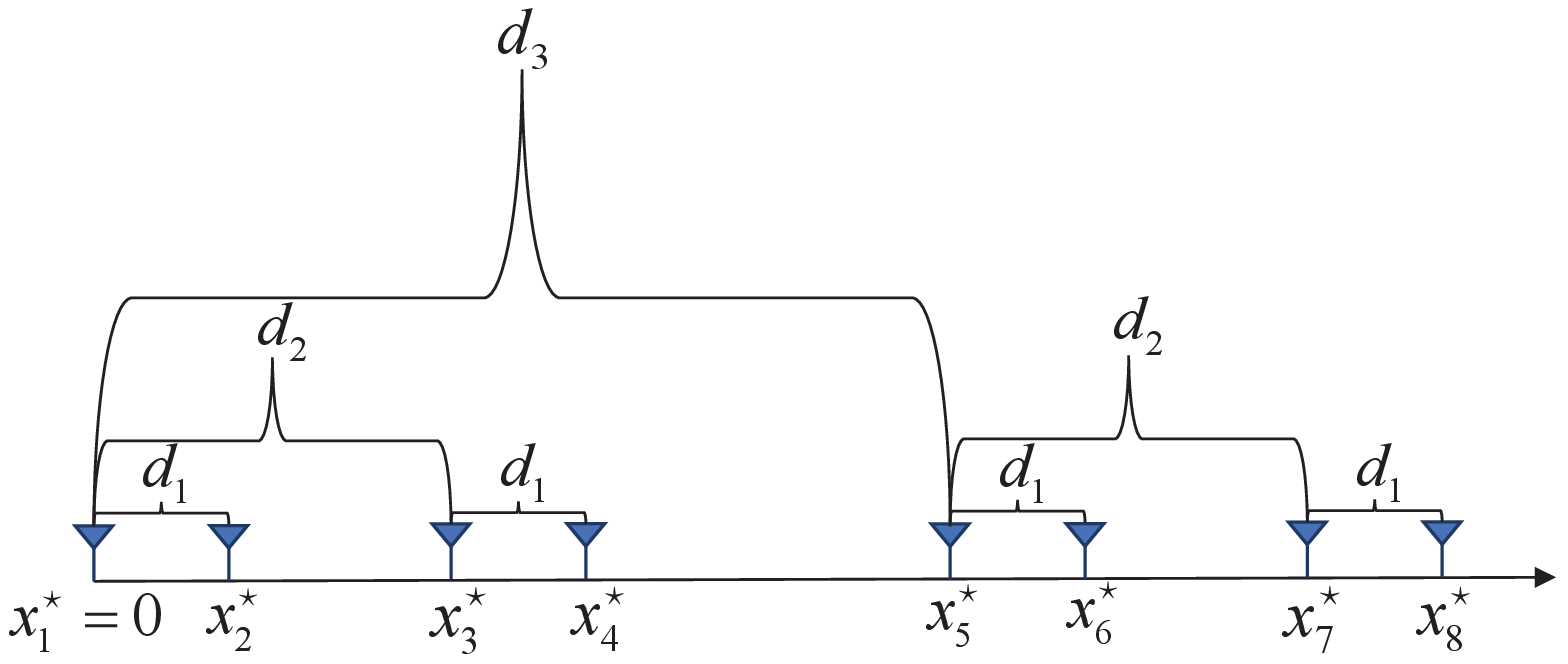}
		\caption{An example of the optimal APV for $N=8$.}
		\label{Fig:opt_position}
	\end{center}
\end{figure}

Next, we are ready to present the feasible solution for problem \eqref{eq_problem3} under some specific values of $N$ and $K$. To this end, we define a factorization vector as follows. Denote the prime factorization of $N$ as $N=\prod _{i=1}^{I_{N}} f_{i}$, where $I_{N}$ represents the total number of prime factors of $N$ and they are sorted in a non-decreasing order $f_{1} \leq f_{2} \leq \cdots \leq f_{I_{N}}$. Then, we define $g_{1}=1$, $g_{i}= \prod_{j=1}^{i-1} f_{j}$, $2 \leq i \leq I_{N}$, and $\mathbf{g}=[g_{1}, g_{2}, \cdots, g_{I_{N}}]^{\mathrm{T}}$. According to basic number theory, for $\forall n \in \mathbb{Z}$, $1 \leq n \leq N$, it can be uniquely determined by the factorization vector $\mathbf{z}_{n} \in \mathbb{Z}^{I_{N}}$ as $n=\mathbf{z}_{n}^{\mathrm{T}} \mathbf{g}+1$ subject to $[\mathbf{z}_{n}]_{i} < f_{i}$, $1 \leq i \leq I_{N}$. Specifically, $\mathbf{z}_{n}$ can be determined by the integer quotients of successively dividing the remainder of number $(n-1)$ by each element in $\mathbf{g}$ (from back to front). For example, for $N=30=2 \times 3 \times 5$, we have $I_{N}=3$ and $\mathbf{g}=[1, 2, 6]^{\mathrm{T}}$. Then, we can obtain $\mathbf{z}_{5}=[0, 2, 0]$, $\mathbf{z}_{24}=[1, 2, 3]$, and so on.

\begin{theorem}\label{Theorem_NMA}
	Denoting the prime factorization of $N$ as $N=\prod _{i=1}^{I_{N}} f_{i}$, an APV $\mathbf{x}^{\star}$ satisfying the SVO condition \eqref{eq_SV_ort} and constraint \eqref{eq_problem2_b} always exists for all $K \leq I_{N}$, which is given by
	\begin{equation}\label{eq_opt_position}
		x_{n}^{\star}= \mathbf{z}_{n}^{\mathrm{T}}\mathbf{d},~1 \leq n \leq N,
	\end{equation}
	with $\mathbf{d}=[d_{1}, d_{2},\cdots, d_{I_{N}}]^{\mathrm{T}}$ and
	\begin{equation}\label{eq_opt_distance}
		d_{i}=\left\{
		\begin{aligned}
			&\frac{\left(q_{i}+1/f_{i}\right)\lambda}{|\cos\theta_{0}-\cos\theta_{i}|},~1\leq i \leq K,\\
			&\sum _{j= 1}^{i-1} (f_{j}-1)d_{j} + d_{\min},~K+1\leq i \leq I_{N},
		\end{aligned}
		\right.
	\end{equation}
	where $q_{i}$ is the minimum integer ensuring $d_{i} \geq \sum _{j= 1}^{i-1} (f_{j}-1)d_{j} + d_{\min}$.
\end{theorem}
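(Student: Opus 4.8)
The plan is to verify the explicit APV in \eqref{eq_opt_position}--\eqref{eq_opt_distance} directly, exploiting the fact that the mixed-radix construction makes the SVO inner product \emph{separable} across the prime factors of $N$. The key structural observation is that, since $N=\prod_{i=1}^{I_N}f_i$ and $n-1=\mathbf{z}_n^{\mathrm{T}}\mathbf{g}$ is a mixed-radix representation with digit bounds $0\le[\mathbf{z}_n]_i\le f_i-1$, the map $n\mapsto\mathbf{z}_n$ is a bijection from $\{1,\dots,N\}$ onto $\prod_{i=1}^{I_N}\{0,1,\dots,f_i-1\}$. Writing $x_n^{\star}=\sum_{i=1}^{I_N}[\mathbf{z}_n]_i d_i$ and $\Delta_k\triangleq\cos\theta_0-\cos\theta_k$, each summand of the SVO inner product factorizes as a product over $i$, so summing over $n$ and invoking the bijection turns each of the $K$ inner products into a product of single-digit sums:
\[
\mathbf{a}(\mathbf{x}^{\star},\theta_k)^{\mathrm{H}}\mathbf{a}(\mathbf{x}^{\star},\theta_0)=\sum_{n=1}^{N}e^{j\frac{2\pi}{\lambda}x_n^{\star}\Delta_k}=\prod_{i=1}^{I_N}\sum_{z=0}^{f_i-1}e^{j\frac{2\pi}{\lambda}z\,d_i\Delta_k},\quad 1\le k\le K.
\]

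To establish the SVO condition \eqref{eq_SV_ort} it then suffices, for each $k$, to kill a single factor, and the natural choice is the $i=k$ factor. Substituting $d_k=(q_k+1/f_k)\lambda/|\Delta_k|$ makes the phase step equal $\frac{2\pi}{\lambda}d_k\Delta_k=\pm2\pi(q_k+1/f_k)$, so the $i=k$ factor reduces to $\sum_{z=0}^{f_k-1}\bigl(e^{\pm j2\pi/f_k}\bigr)^{z}$, the sum of all $f_k$-th roots of unity, which vanishes for the prime $f_k\ge2$ --- exactly the computation behind Lemma \ref{Lemma_1Interf}. Hence every one of the $K$ inner products is zero, and \eqref{eq_SV_ort} holds for all $K\le I_N$. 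Note that only the factors $i\le K$ are required to vanish, so the remaining $d_i$ with $i>K$ are free to be chosen purely for spacing.

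It remains to verify the minimum-distance constraint \eqref{eq_problem2_b}, which I expect to be the main obstacle, since this is where the recursive choice of the $d_i$ (and the role of $q_i$) is essential. The crux is that the distances grow fast enough that $d_i\ge\sum_{j=1}^{i-1}(f_j-1)d_j+d_{\min}$ holds for every $i$: for $i\le K$ this is precisely the condition defining the minimum integer $q_i$, and for $i>K$ it holds with equality by \eqref{eq_opt_distance}. Given $m\ne n$, I would let $i^{\star}$ be the largest index at which $\mathbf{z}_m$ and $\mathbf{z}_n$ differ, say $[\mathbf{z}_m]_{i^{\star}}>[\mathbf{z}_n]_{i^{\star}}$. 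Then the coordinates above $i^{\star}$ cancel, and
\[
x_m^{\star}-x_n^{\star}\ge d_{i^{\star}}-\sum_{i=1}^{i^{\star}-1}(f_i-1)d_i\ge d_{\min},
\]
because the leading differing digit contributes at least $d_{i^{\star}}$ while the lower digits subtract at most $\sum_{i<i^{\star}}(f_i-1)d_i$. This simultaneously shows the positions are distinct and pairwise separated by at least $d_{\min}$, completing the verification.

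Finally, I would remark that this closed form is the ``unrolled'' version of the inductive scheme in Lemmas \ref{Lemma_1Interf} and \ref{Lemma_KInterf}: the base case $K=1$ supplies the innermost factor $f_1$, and each application of Lemma \ref{Lemma_KInterf} with $N_2=f_{i+1}$ adjoins one more factor while nulling one more direction, so that after $I_N-1$ steps feasibility is reached for $(N,I_N)$ and hence for all $K\le I_N$. The direct computation above both confirms this and exhibits the explicit optimal APV.
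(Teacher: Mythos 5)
Your proof is correct, and it takes a genuinely different route from the paper's. The paper proves \emph{existence} by induction on the number of prime factors: Lemma \ref{Lemma_1Interf} is the base case, Lemma \ref{Lemma_KInterf} is the inductive step (its proof peels off one Kronecker factor at a time, replicating an $N_{1}$-element subarray $N_{2}$ times at spacing $d$, with a two-case analysis $K=K_{1}+1$ versus $K\leq K_{1}$), and the closed form \eqref{eq_opt_position} is then recovered \emph{a posteriori} by unrolling the recursion and telescoping the inter-antenna distances $x_{p_{i}}^{\star}-x_{p_{i-1}}^{\star}=[\mathbf{z}_{n}]_{i}d_{i}$. You instead verify the stated closed form directly: the mixed-radix bijection $n\mapsto\mathbf{z}_{n}$ factorizes each SVO inner product in one shot into a product of $I_{N}$ geometric sums, the $i=k$ factor being a complete sum of $f_{k}$-th roots of unity that vanishes by the choice of $d_{k}$; the minimum-distance constraint \eqref{eq_problem2_b} follows from the leading-digit domination $d_{i^{\star}}\geq\sum_{i<i^{\star}}(f_{i}-1)d_{i}+d_{\min}$, which holds by the definition of $q_{i}$ for $i\leq K$ and with equality for $i>K$. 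Both arguments rest on the same two ingredients (Kronecker separability of the steering-vector sum and roots of unity), but yours is self-contained and proves exactly the theorem as stated --- feasibility of the displayed $\mathbf{x}^{\star}$ itself --- rather than existence first and the formula second; it also dispenses with Lemma \ref{Lemma_KInterf} entirely. What the paper's modular route buys in exchange is a reusable lemma with a clear physical interpretation (subarray replication), which underlies the construction illustrated in Fig. \ref{Fig:opt_position}. One remark your write-up could sharpen: primality of the $f_{k}$ is never actually used --- the root-of-unity sum vanishes for any integer factor $f_{k}\geq 2$ --- so any factorization of $N$ into factors at least $2$ works, and the prime factorization is chosen only because it maximizes the number $I_{N}$ of nullable directions.
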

\begin{proof}
	See Appendix \ref{App_Theorem_NMA}.
\end{proof}

Theorem \ref{Theorem_NMA} indicates that the full array gain over the desired direction $\theta_{0}$ and null steering over $K$ undesired directions $\{\theta_{k}\}_{1 \leq k \leq K}$ can be concurrently achieved by optimizing the APV when $K \leq I_{N}$. The basic idea for constructing the optimal solution in the proof of Theorem \ref{Theorem_NMA} is by ensuring the SVO condition over undesired directions one by one, subject to the minimum-distance constraint between any two MAs. Fig. \ref{Fig:opt_position} shows an example of the optimal APV for $N=8$ and $K=3$, where the 1st MA is deployed at $x_{1}^{\star}=0$. Then, the 2nd MA is deployed at $x_{2}^{\star}=x_{1}^{\star} + d_{1}$ for satisfying the SVO condition over $\theta_{1}$. Next, the 3rd and 4th MAs are deployed at $x_{3}^{\star}=x_{1}^{\star} + d_{2}$ and $x_{4}^{\star}=x_{2}^{\star} + d_{2}$, respectively, for satisfying the SVO condition over $\theta_{2}$, while the SVO condition over $\theta_{1}$ is still guaranteed. Finally, the 5th-8th MAs are deployed at $x_{5}^{\star}=x_{1}^{\star} + d_{3}$, $x_{6}^{\star}=x_{2}^{\star} + d_{3}$, $x_{7}^{\star}=x_{3}^{\star} + d_{3}$, and $x_{8}^{\star}=x_{4}^{\star} + d_{3}$ for guaranteeing the SVO condition over $\theta_{3}$, while the SVO conditions over $\theta_{1}$ and $\theta_{2}$ are still maintained. Since only integer $q_{i}$ needs to be searched for obtaining $d_{i}$, $1 \leq i \leq I_{N}$, the computational complexity for constructing the optimal APV in Theorem \ref{Theorem_NMA} is $\mathcal{O}(I_{N})$.

It is worth noting that Theorem \ref{Theorem_NMA} in general only provides the sufficient conditions for the optimal APV satisfying the SVO condition \eqref{eq_SV_ort} and constraint \eqref{eq_problem2_b}, which may not be necessary. For the case of $K> I_{N}$, it still remains an open problem whether such an optimal APV solution exists or not to ensure $L(\mathbf{x})=0$. In this case, suboptimal solutions for problem \eqref{eq_problem} can be obtained as follows. First, we select $I_{N}$ prior directions which have strong interference and design the APV according to Theorem \ref{Theorem_NMA}. Then, for the designed APV, we employ ZF beamforming in \eqref{eq_opt_AWV} to null the interference over the other $(K-I_{N})$ undesired directions.

\section{Numerical Results}
In this section, numerical results are provided to validate the enhanced beamforming performance of MA arrays, where we employ the optimal APV given in \eqref{eq_opt_position} and the corresponding optimal AWV shown in \eqref{eq_opt_AWV2}. An $N$-dimensional FPA array with half-wavelength antenna spacing is considered as a benchmark for performance comparison. Specifically, digital beamforming is used for the FPA array, which employs the ZF solution given in \eqref{eq_opt_AWV}. In addition, analog beamforming is also considered for the FPA array by adopting the Kronecker decomposition-based approach proposed in \cite{zhu2017Kronec}, which uses $K$ factors of the analog beamformer with $N$ antennas for nulling interference over all undesired directions and the remaining $(I_{N}-K)$ factors for maximizing the beam gain over the desired direction.

\begin{figure}[t]
	\begin{center}
		\includegraphics[width=\figwidth cm]{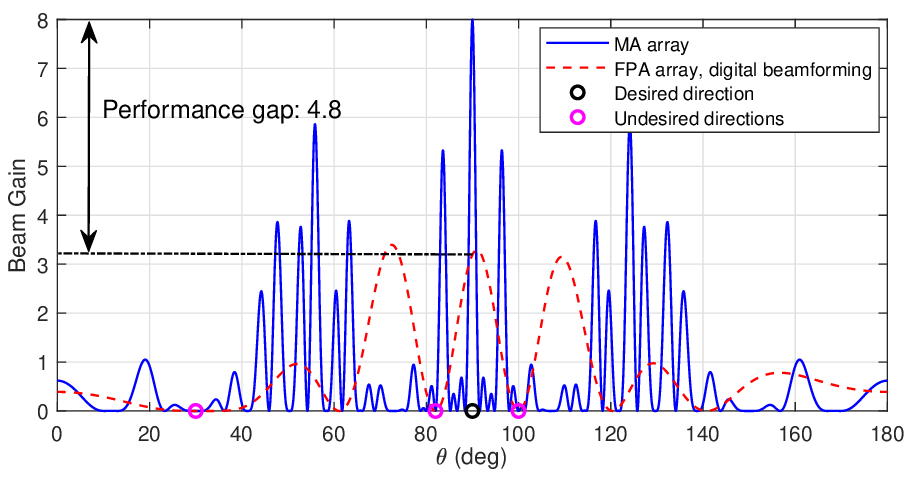}
		\caption{Comparison of the beam patterns between MA and FPA arrays assuming digital beamforming for the FPA array, with $N=8$, $K=3$, $\theta_{0}=90^{\circ}$, $\theta_{1}=30^{\circ}$, $\theta_{2}=82^{\circ}$, and $\theta_{3}=100^{\circ}$.}
		\label{Fig:pattern_ZF}
	\end{center}
\end{figure}

\begin{table}[t]\centering  
	\caption{The normalized APVs of the MA and FPA arrays.}\label{Tab_APV}
	\resizebox{\linewidth}{!}{
	\begin{tabular}{|c|c|c|c|c|c|c|c|c|}
		\hline
		& $x_{1}$ & $x_{2}$ & $x_{3}$ & $x_{4}$ & $x_{5}$ & $x_{6}$ & $x_{7}$ & $x_{8}$ \\ \hline
		FPA 							& 0       & 0.5     & 1       & 1.5     & 2       & 2.5     & 3       & 3.5     \\ \hline
		MA (Fig. \ref{Fig:pattern_ZF})  & 0       & 1.73    & 8.64    & 10.37   & 17.96   & 19.70   & 26.60   & 28.33   \\ \hline
		MA (Fig. \ref{Fig:pattern_kron})  & 0       & 1.52    & 2.66    & 4.18    & 6.10    & 7.63    & 8.76    & 10.29   \\ \hline
	\end{tabular}}
\end{table}

Fig. \ref{Fig:pattern_ZF} compares the beam patterns between the MA and FPA arrays assuming digital beamforming for the later, with $N=8$, $K=3$, $\theta_{0}=90^{\circ}$, $\theta_{1}=30^{\circ}$, $\theta_{2}=82^{\circ}$, and $\theta_{3}=100^{\circ}$. Since the beam gains of the FPA array over all the three undesired directions should be nulled to zero by the ZF-based AWV, its array gain over the desired direction suffers from a significant loss, i.e., $L(\mathbf{x})=4.8$ in \eqref{eq_beam_gain}. In contrast, by exploiting the additional DoFs in antenna position optimization, the SV of the MA array for the desired signal direction becomes orthogonal to those for all the undesired directions. As a result, it is observed that the full array gain (i.e., $N=8$) and null steering are achieved concurrently by the MA array beamforming (with analog beamforming only). The corresponding values of the antenna positions (normalized by $\lambda$) in the obtained APV according to Theorem \ref{Theorem_NMA} for the MA array are shown in Table \ref{Tab_APV}, as compared to those of the FPA array. It is observed that the end-to-end length of the linear MA array corresponding to the optimal APV solution is $x_{8}^{\star}-x_{1}^{\star}=28.33\lambda$, which is about $8$ times longer than that of the FPA array with half-wavelength antenna spacing.

\begin{figure}[t]
	\begin{center}
		\includegraphics[width=\figwidth cm]{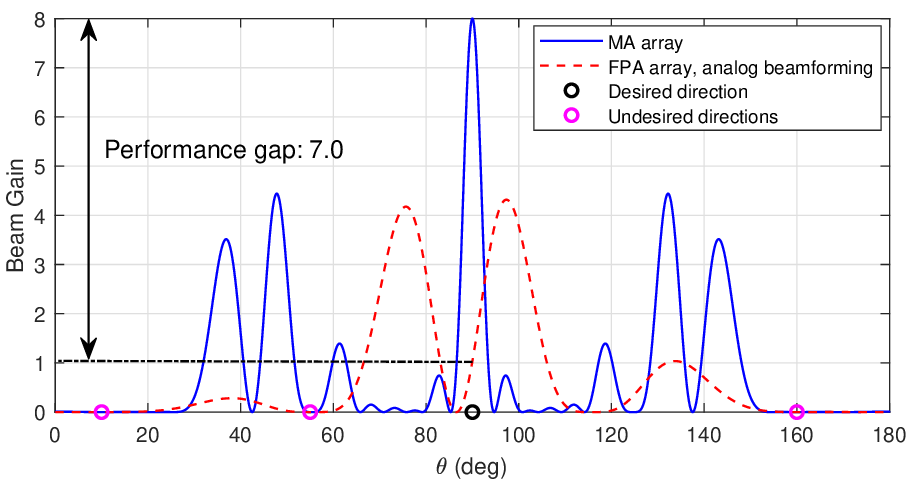}
		\caption{Comparison of the beam patterns between MA and FPA arrays assuming analog beamforming for the FPA array, with $N=8$, $K=3$, $\theta_{0}=90^{\circ}$, $\theta_{1}=10^{\circ}$, $\theta_{2}=55^{\circ}$, and $\theta_{3}=160^{\circ}$.}
		\label{Fig:pattern_kron}
	\end{center}
\end{figure}

Fig. \ref{Fig:pattern_kron} compares the beam patterns between the MA and FPA arrays assuming analog beamforming for the later, with $N=8$, $K=3$, $\theta_{0}=90^{\circ}$, $\theta_{1}=10^{\circ}$, $\theta_{2}=55^{\circ}$, and $\theta_{3}=160^{\circ}$. It is observed again that the full array gain of the MA array and null steering over all three undesired directions are achieved concurrently. In contrast, due to the limited DoF in analog beamforming, the FPA array undergoes significant loss of the array gain ($L(\mathbf{x})=7.0$) over the desired direction when nulling the beam gains over all three undesired directions. Moreover, it is observed from Table \ref{Tab_APV} that the end-to-end length of the linear MA array corresponding to the optimal APV solution is $x_{8}^{\star}-x_{1}^{\star}=10.29\lambda$, which is about $3$ times longer than that of the FPA array with half-wavelength antenna spacing.

\begin{figure}[t]
	\begin{center}
		\includegraphics[width=\figwidth cm]{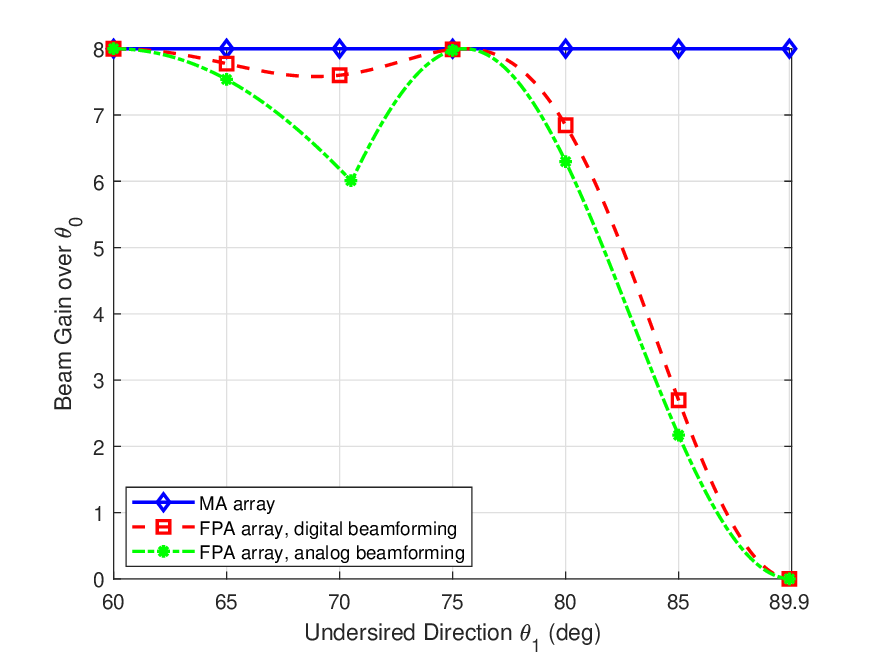}
		\caption{Beam gains of the MA and FPA arrays over the desired direction $\theta_{0}=90^{\circ}$ versus undesired direction $\theta_{1}$, with $N=8$ and $K=1$.}
		\label{Fig:gain_loss}
	\end{center}
\end{figure}

Next, we consider the case of $N=8$ and $K=1$ and evaluate the impact of the undesired direction $\theta_{1}$'s value on the beam gain over the desired signal direction $\theta_{0}=90^{\circ}$. As can be observed from Fig \ref{Fig:gain_loss}, the proposed MA array beamforming can always achieve the full array gain over the desired direction by jointly optimizing the APV and AWV subject to null steering. However, the FPA array cannot achieve the full array gain in general, with digital or analog beamforming. In particular, when $\theta_{1}$ approaches $\theta_{0}$, the loss of the beam gain for the FPA array becomes more significant with both digital and analog beamforming because the correlation of the SVs for the FPA array over the two directions increases.

\section{Conclusion}
In this letter, we investigated the MA array enhanced beamforming by exploiting the new DoF in antenna position optimization. We showed that by jointly designing the APV and AWV of a linear MA array, the full array gain can be reaped over the desired direction with null steering simultaneously realized over all undesired directions, under certain numbers of MAs and null-steering directions. Moreover, the optimal solutions of the APV and AWV for the MA array were derived in closed form, which reveal that MA arrays require analog beamforming with signal phase adjustment only. Numerical results validated our analytical solutions and showed the performance superiority of MA arrays to conventional FPA arrays with digital or analog beamforming.

{\appendices
%\setlength{\abovedisplayskip}{2.5 pt}
%\setlength{\belowdisplayskip}{2.5 pt}
%\vspace{-0.1 cm}
\section{Proof of Lemma \ref{Lemma_KInterf}} \label{App_Lemma_KInterf}
Let $\left\{\theta_{k}\right\}_{1 \leq k \leq K_{1}+1}$ denote the set of $(K_{1}+1)$ undesired directions. According to the precondition of Lemma \ref{Lemma_KInterf}, there always exists an APV for the $N_{1}$-dimensional MA array, denoted by $\bar{\mathbf{x}}=\left[\bar{x}_{1},\bar{x}_{2},\cdots,\bar{x}_{N_{1}}\right]^{\mathrm{T}}$, which satisfies $\mathbf{a}(\bar{\mathbf{x}}, \theta_{k})^{\mathrm{H}} \mathbf{a}(\bar{\mathbf{x}}, \theta_{0})=0$ for $1 \leq k \leq K_{1}$ as well as $|\bar{x}_{m}-\bar{x}_{n}| \geq d_{\min}$ for $1 \leq m \neq n \leq N_{1}$. Without loss of generality, we assume that the elements in $\bar{\mathbf{x}}$ are sorted in an increasing order, i.e., $\bar{x}_{1}<\bar{x}_{2}<\cdots<\bar{x}_{N_{1}}$. Then, we construct the APV for the $N_{1}N_{2}$-dimensional MA array as
\begin{equation}\label{eq_subarray}
	\begin{aligned}
	\hat{\mathbf{x}}&=\left[\bar{\mathbf{x}}^{\mathrm{T}}, \bar{\mathbf{x}}^{\mathrm{T}}+d, \cdots, \bar{\mathbf{x}}^{\mathrm{T}}+(N_{2}-1)d\right]^{\mathrm{T}},
	\end{aligned}
\end{equation} 
where $d$ is a distance parameter to be determined. Denoting $\mathbf{y}=[0, d, 2d, \cdots, (N_{2}-1)d]^{\mathrm{T}}$, the SVO condition of the constructed APV $\hat{\mathbf{x}}$ can be checked by examining
\begin{equation}\label{eq_solutionN}
	\begin{aligned}
		&\mathbf{a}(\hat{\mathbf{x}}, \theta_{k})^{\mathrm{H}} \mathbf{a}(\hat{\mathbf{x}}, \theta_{0})\\
		=&\left[\mathbf{a}(\mathbf{y}, \theta_{k}) \otimes \mathbf{a}(\bar{\mathbf{x}}, \theta_{k})\right]^{\mathrm{H}} \left[\mathbf{a}(\mathbf{y}, \theta_{0}) \otimes \mathbf{a}(\bar{\mathbf{x}}, \theta_{0})\right] \\
		=&\mathbf{a}(\mathbf{y}, \theta_{k})^{\mathrm{H}} \mathbf{a}(\mathbf{y}, \theta_{0}) \times \mathbf{a}(\bar{\mathbf{x}}, \theta_{k})^{\mathrm{H}} \mathbf{a}(\bar{\mathbf{x}}, \theta_{0})\\
		=&\sum \limits _{n=1}^{N_{2}} e^{j\frac{2\pi}{\lambda}(n-1)d\left(\cos\theta_{0}-\cos\theta_{k}\right)} \times \mathbf{a}(\bar{\mathbf{x}}, \theta_{k})^{\mathrm{H}} \mathbf{a}(\bar{\mathbf{x}}, \theta_{0}).
	\end{aligned}
\end{equation}
Next, we determine the value of $d$ by considering the case of $K = K_{1}+1$ and the case of $K \leq K_{1}$ separately.

\emph{Case 1}: For $K = K_{1}+1$, we consider $d=\frac{\left(q+1/N_{2}\right)\lambda}{\left|\cos\theta_{0}-\cos\theta_{K_{1}+1}\right|}$, with $q$ being the minimum integer which ensures $\bar{x}_{1}+d-\bar{x}_{N_{1}} \geq d_{\min}$. 

On one hand, for $1 \leq k \leq K_{1}$, $\mathbf{a}(\bar{\mathbf{x}}, \theta_{k})^{\mathrm{H}} \mathbf{a}(\bar{\mathbf{x}}, \theta_{0})=0$ ensures $\mathbf{a}(\hat{\mathbf{x}}, \theta_{k})^{\mathrm{H}} \mathbf{a}(\hat{\mathbf{x}}, \theta_{0})=0$. For $k=K_{1}+1$, it is easy to verify $\sum \limits _{n=1}^{N_{2}} e^{j\frac{2\pi}{\lambda}(n-1)d\left(\cos\theta_{0}-\cos\theta_{K_{1}+1}\right)}=0$, and then we have $\mathbf{a}(\hat{\mathbf{x}}, \theta_{K_{1}+1})^{\mathrm{H}} \mathbf{a}(\hat{\mathbf{x}}, \theta_{0})=0$. Thus, we can conclude that $\hat{\mathbf{x}}$ satisfies the SVO condition, i.e., $\mathbf{a}(\hat{\mathbf{x}}, \theta_{k})^{\mathrm{H}} \mathbf{a}(\hat{\mathbf{x}}, \theta_{0})=0$, $1 \leq k \leq K_{1}+1$. 

On the other hand, since $\bar{\mathbf{x}}$ satisfies constraint \eqref{eq_problem2_b}, we have $|[\hat{\mathbf{x}}]_{tN_{1}+m}-[\hat{\mathbf{x}}]_{tN_{1}+n}| = |\bar{x}_{m}-\bar{x}_{n}| \geq d_{\min}$ for $1 \leq m \neq n \leq N_{1}$ and $0 \leq t \leq N_{2}-1$. Besides, for $1 \leq m, n \leq N_{1}$ and $0 \leq t_{1} < t_{2} \leq N_{2}-1$, we have $|[\hat{\mathbf{x}}]_{t_{1}N_{1}+m}-[\hat{\mathbf{x}}]_{t_{2}N_{1}+n}| \geq |[\hat{\mathbf{x}}]_{(t_{1}+1)N_{1}}-[\hat{\mathbf{x}}]_{t_{2}N_{1}+1}| \geq \bar{x}_{1}+d-\bar{x}_{N_{1}} \geq d_{\min}$ due to the increasing order of the elements in $\hat{\mathbf{x}}$. Thus, we conclude that $\hat{\mathbf{x}}$ also satisfies constraint \eqref{eq_problem2_b}, i.e., $|[\hat{\mathbf{x}}]_{m}-[\hat{\mathbf{x}}]_{n}| \geq d_{\min}$, $1 \leq m \neq n \leq N_{1}N_{2}$.

\emph{Case 2}: For $K \leq K_{1}$, we consider $d=\bar{x}_{N_{1}}+d_{\min}-\bar{x}_{1}$. 

On one hand, for $1 \leq k \leq K \leq K_{1}$, we always have $\mathbf{a}(\bar{\mathbf{x}}, \theta_{k})^{\mathrm{H}} \mathbf{a}(\bar{\mathbf{x}}, \theta_{0})=0$ and thus $\mathbf{a}(\hat{\mathbf{x}}, \theta_{k})^{\mathrm{H}} \mathbf{a}(\hat{\mathbf{x}}, \theta_{0})=0$. On the other hand, we always have $|[\hat{\mathbf{x}}]_{m}-[\hat{\mathbf{x}}]_{n}| \geq d_{\min}$, $1 \leq m \neq n \leq N_{1}N_{2}$, which can be proved in a similar way to that for the previous case of $K = K_{1}+1$. Thus, we conclude that $\hat{\mathbf{x}}$ satisfies the SVO condition and constraint \eqref{eq_problem2_b} for $K \leq K_{1}$.

Combining both cases of $K = K_{1}+1$ and $K \leq K_{1}$, we have shown that an APV satisfying the SVO condition \eqref{eq_SV_ort} and constraint \eqref{eq_problem2_b} always exists for $N=N_{1}N_{2}$ and $K \leq K_{1}+1$. This thus completes the proof.

\vspace{-0.1 cm}
\section{Proof of Theorem \ref{Theorem_NMA}} \label{App_Theorem_NMA}
According to Lemma \ref{Lemma_1Interf}, Theorem \ref{Theorem_NMA} holds for $I_{N}=1$ and $K \leq I_{N}$. Then, suppose that the SVO condition \eqref{eq_SV_ort} and constraint \eqref{eq_problem2_b} can be satisfied by an APV for $I_{N_{1}}=K_{1}$ and $K \leq I_{N_{1}}$. According to Lemma \ref{Lemma_KInterf}, it follows that an APV satisfying the SVO condition \eqref{eq_SV_ort} and constraint \eqref{eq_problem2_b} also exists for $I_{N'}=K_{1}+1$ and $K \leq I_{N'}$. This is because we can always rewrite the prime factorization as $N'=\prod _{i=1}^{K_{1}+1} f_{i} = \prod _{i=1}^{K_{1}} f_{i} \times f_{K_{1}+1} \triangleq N_{1} \times N_{2}$ and apply the constructed APV in the proof of Lemma \ref{Lemma_KInterf}. As such, the complete induction ensures that an APV satisfying the SVO condition \eqref{eq_SV_ort} and constraint \eqref{eq_problem2_b} always exists for all $ K \leq I_{N}$. 

Next, we construct the optimal APV based on the above procedure. According to the proof of Lemma \ref{Lemma_KInterf}, the distance between the $(g_{i}+1)$-th antenna and the 1st antenna is given by $d_{i}$. For $1 \leq i \leq K$, $d_{i}=\frac{\left(q_{i}+1/f_{i}\right)\lambda}{|\cos\theta_{0}-\cos\theta_{i}|}$ guarantees the SVO condition over $\theta_{i}$ as well as the minimum distance constraint between the $g_{i}$-th antenna and the $(g_{i}+1)$-th antenna. For $K+1 \leq i \leq I_{N}$, there is no additional null-steering direction and thus $d_{i}=\sum _{1\leq j \leq i-1} (f_{j}-1)d_{j} + d_{\min}$ guarantees the minimum distance constraint between the $g_{i}$-th antenna and the $(g_{i}+1)$-th antenna. Thus, constraint \eqref{eq_problem2_b} is satisfied.

Recall that $\mathbf{z}_{n}$ is a unique $I_{N}$-dimensional vector satisfying $n=\mathbf{z}_{n}^{\mathrm{T}} \mathbf{g}+1$. Denoting $p_{i}=\sum _{j=1}^{i} [\mathbf{z}_{n}]_{j} g_{j}+1$, $1 \leq i \leq I_{N}$, it is easy to verify $p_{I_{N}}=n$ and $ p_{i}-p_{i-1}=[\mathbf{z}_{n}]_{i} g_{i}$, $2 \leq i \leq I_{N}$. Note that according to the constructed APV in the proof of Lemma \ref{Lemma_KInterf}, the distance between the $p_{i}$-th antenna and the $p_{i-1}$-th antenna is $[\mathbf{z}_{n}]_{i} d_{i}$, $2 \leq i \leq I_{N}$. Thus, the distance between the $n$-th antenna and the 1st antenna can be expressed as 
\begin{equation}\label{eq_distance}
	\begin{aligned}
		x_{n}^{\star}-x_{1}^{\star}&= \sum \limits _{i = 2}^{I_{N}} (x_{p_{i}}^{\star}-x_{p_{i-1}}^{\star}) + x_{p_{1}}^{\star}-x_{1}^{\star}\\
		&=\sum _{i = 2}^{I_{N}} [\mathbf{z}_{n}]_{i}d_{i} + [\mathbf{z}_{n}]_{1}d_{1} = \mathbf{z}_{n}^{\mathrm{T}}\mathbf{d} ,~1 \leq n \leq N.
	\end{aligned}
\end{equation}
Without loss of optimality, we set $x_{1}^{\star}=0$. Thus, $\{x_{n}^{\star}\}_{1\leq n \leq N}$ in \eqref{eq_opt_position} satisfies the SVO condition \eqref{eq_SV_ort} and constraint \eqref{eq_problem2_b}, which is an optimal APV of problem \eqref{eq_problem2} for achieving the full array gain over the desired direction. This thus completes the proof.
}

\vspace{-0.3 cm}

\bibliographystyle{IEEEtran} % use IEEEtran.bst style
%\IEEEtriggeratref{6}
\bibliography{IEEEabrv,ref_zhu}

\end{document}